\documentclass[11pt]{article}

\usepackage[left=2.2cm,right=2.2cm,top=2.5cm,bottom=2.5cm]{geometry}

\usepackage{authblk}

\title{A coalgebraic higher-order modal fixed-point logic}

\author{Ryan Tay, Harsh Beohar, and Charles Grellois}
\affil{University of Sheffield, UK}

\date{}

\usepackage[english]{babel}
\usepackage[T1]{fontenc}
\usepackage[utf8]{inputenc}
\usepackage{graphicx}
\usepackage{framed}
\usepackage{amsmath}
\usepackage{amssymb}
\usepackage{amsfonts}
\usepackage{mathrsfs}
\usepackage{array}
\usepackage{amsthm}
\usepackage{amscd}
\usepackage{mathtools}
\usepackage{enumitem}
\usepackage{bbm}
\usepackage[normalem]{ulem}
\usepackage{cancel}
\usepackage[round]{natbib}
\usepackage{datetime}
\usepackage{pifont}

\usepackage{stmaryrd}
\usepackage{cmll}

\usepackage{url}
\usepackage{hyperref}
\usepackage[capitalise, nameinlink]{cleveref}
\usepackage{breakcites}

\usepackage{xcolor}
\hypersetup{
    colorlinks=true,
    breaklinks=true,
    linkcolor={red!70!black},
    citecolor={green!45!black},
    urlcolor={blue!80!black}
}

\newtheoremstyle{break_italics}
  {\topsep}{\topsep}%
  {\itshape}{}%
  {\bfseries}{}%
  {\newline}{}%
\theoremstyle{break_italics}
\newtheorem{theorem}{Theorem}[section]
\newtheorem*{theorem*}{Theorem}

\newtheorem*{lemma*}{Lemma}
\newtheorem{proposition}[theorem]{Proposition}
\newtheorem*{proposition*}{Proposition}

\newtheorem*{corollary*}{Corollary}
\newtheorem{definition}[theorem]{Definition}
\newtheorem*{definition*}{Definition}

\newtheorem*{claim*}{Claim}

\newtheorem*{fact*}{Fact}
\newtheorem{question}[theorem]{Question}
\newtheorem*{question*}{Question}

\newtheoremstyle{break_upright}
  {\topsep}{\topsep}%
  {}{}%
  {\bfseries}{}%
  {\newline}{}%
\theoremstyle{break_upright}
\newtheorem{example}[theorem]{Example}
\newtheorem*{observation*}{Observation}

\theoremstyle{remark}
\newtheorem*{remark}{Remark}

\usepackage{tikz}
\usepackage{tikz-cd}
\usepackage{adjustbox}
\usetikzlibrary{decorations.pathmorphing, arrows.meta}
\usepackage{quiver}

\usepackage{bussproofs}
\newenvironment{bprooftree}
  {\leavevmode\hbox\bgroup}
  {\DisplayProof\egroup}
 
\allowdisplaybreaks

\newcommand{\dom}{\mathrm{dom}}
\newcommand{\ord}{\mathrm{ord}}
\newcommand{\LFP}{\mathrm{LFP}}
\newcommand{\GFP}{\mathrm{GFP}}
\newcommand{\Set}{\mathbf{Set}}
\newcommand{\op}{\mathrm{op}}
\newcommand{\HFL}{\mathsf{HFL}}
\newcommand{\Mod}{\mathrm{Mod}}
\newcommand{\B}{\mathbb{B}}
\newcommand{\prob}{\mathbb{P}}

\newcommand{\suppt}{\mathrm{suppt}}
\newcommand{\Coalg}{\mathbf{Coalg}}
\newcommand{\id}{\mathrm{id}}

\newcommand{\cmark}{\text{\ding{51}}}

\renewcommand{\P}{\mathcal{P}}
\renewcommand{\S}{\mathbb{S}}

\newcommand{\Circle}{\bigcirc}

\begin{document}

\maketitle

\begin{abstract}
	We introduce a coalgebraic extension of the higher-order modal fixed-point logic (HFL) which subsumes both HFL and its probabilistic extension. We show that the emptiness problem for nondeterministic finite automata as well as the value-1 problem for probabilistic automata reduce to model-checking problems for this coalgebraic formulation of HFL.
\end{abstract}

\tableofcontents

\section{Introduction}

Studying the modal logic of a coalgebra of an endofunctor on the category $\Set$, through the use of predicate liftings as introduced by \citet{Pattinson2003}, has proven to be a very fruitful line of research. It has allowed us to study the modal logic of many different notions of transition systems (see, for instance, \citet{JacobsIntroductionToCoalgebra} or \citet{RuttenTheMethodOfCoalgebra}) in a uniform manner, encapsulating modalities beyond the classical $\Box$ and $\Diamond$. Many substantial results have been obtained concerning coalgebraic modal logic, for example, its finite model property \cite[Theorem~6.3]{Pattinson2003}, its expressivity \cite[Theorem~14]{Schoreder2008}, and its uniform interpolation \cite*[Theorem~37]{SeifanSchroderPattinson2017}. Furthermore, these semantics have been extended by \citet*{CirsteaKupkePattinson2011} to encapsulate the modal $\mu$-calculus. This extension also enjoys the finite (in fact, bounded) model property \cite*[Theorem~3]{FontaineLealVenema2010}. Moreover, a celebrated result by \cite{JaninWalukiewicz1996}, asserting that the modal $\mu$-calculus is the bisimulation-invariant fragment of monadic second-order logic, has been lifted to the coalgebraic setting under certain conditions \cite*[Theorem~5.10]{EnqvistSeifanVenema2017}.

\cite{ViswanathanViswanathan2004} introduced another extension of the modal $\mu$-calculus which allows for fixed-points of higher-order functions. They called this logic the \emph{higher-order modal fixed-point logic (HFL)}. Aside from being an interesting logic to study in its own right, due to it subsuming the modal $\mu$-calculus, \citet*[Theorem~10 and Theorem~22]{KobayashiLozesBruse2017} established mutual reductions between the model-checking problem for HFL and the higher-order model-checking problem (see, for instance, \cite{Ong2015}). As the decidability of the higher-order model-checking problem is not so easy to see \citep{Ong2006} \citep{KobayashiOng2009} \citep{SalvatiWalukiewicz2014}, whereas the decidability of the HFL model-checking problem can be spotted immediately from definitions, serious effort has been devoted to the HFL model-checking problem (see, for instance, \cite{Kobayashi2021}).

The logic HFL has been given a further probabilistic extension by \cite*{MitaniKobayashiTsukada2021}. This \emph{probabilistic higher-order modal fixed-point logic (PHFL)} also has connections to probabilistic higher-order recursion schemes \citep*[Definition~2.1]{KobayashiDalLagoGrellois2020} as well as recursive Markov chains \citep[Section~2.1]{EtessamiYannakakis2009}; such connections were established in the same paper by \citet*[Section~5.3]{MitaniKobayashiTsukada2021}. While the original formulation of HFL interpreted its formulas in labelled transition systems, PHFL formulas find their semantics in Markov chains.

We take the natural step of providing coalgebraic semantics to the higher-order modal fixed-point logic, with the intention of capturing both the logics HFL and PHFL. Following the coalgebraic $\mu$-calculus \citep*{CirsteaKupkePattinson2011}, we restrict our attention to coalgebras of endofunctors on $\Set$. As an application, we show that this coalgebraic HFL is able to express a certain generalisation of both the emptiness problem of nondeterministic finite automata (which asks whether every word is rejected, or equivalently, whether their associated regular language is empty) as well as the value-1 problem (see, for instance, \citet[Problem~5]{GimbertOualhadj2010}) of probabilistic automata in the sense of \citet[Definition~4]{Rabin1963}.

\section{Syntax and semantics}

The coalgebraic higher-order modal fixed-point logic (henceforth \emph{coalgebraic HFL}) will be an extension of the classical modal logic in the following way: each modality will be given by a \emph{monotone predicate lifting} (which will be defined later); operations from the simply-typed $\lambda$-calculus will be introduced; and least and greatest fixed-point operators will be available for higher-order functions.

We fix a \emph{base type} $\bullet$. The set $\mathrm{Types}$ of \emph{(simple) types} is generated by the grammar
\[
	\tau \Coloneqq \bullet \ | \ \tau \to \tau.
\]
For types $\tau_1, \tau_2, \dots, \tau_n, \tau_{n+1}$, we write
\[
	\tau_1 \to \tau_2 \to \cdots \to \tau_n \to \tau_{n+1} \coloneqq \tau_1 \to (\tau_2 \to (\cdots \to (\tau_n \to \tau_{n+1}) \cdots)),
\]
so that $\to$ associates to the right. The \emph{order} of a type $\tau$, denoted $\ord(\tau)$, is defined recursively as follows:
\begin{align*}
	\ord(\bullet) &\coloneqq 0, \\
	\ord(\tau_1 \to \tau_2) &\coloneqq \max(\mathrm{ord}(\tau_1) + 1, \mathrm{ord}(\tau_2)) \quad \text{for all types $\tau_1$ and $\tau_2$.}
\end{align*}
Note that
\[
	\ord(\tau_1 \to \cdots \to \tau_n \to \bullet) = 1 + \max(\ord(\tau_1), \dots, \ord(\tau_n))
\]
for types $\tau_1, \dots, \tau_n$.

Fix a complete lattice $\B$, i.e. a poset with all small meets and small joins. Let $\top_\B$ and $\bot_\B$ respectively denote the maximum and minimum elements of $\B$, and let $<_\B$ and $\leq_\B$ respectively denote the strict and weak partial orders on $\B$. Furthermore, we fix a countably infinite set $\mathrm{Var}$ of \emph{variables} and a set $\Mod$ of \emph{modalities}, where each modality $\heartsuit \in \Mod$ has an \emph{arity} $n \in \omega$. Without loss of generality, we shall assume that the sets $\B$, $\mathrm{Var}$, and $\Mod$ are all pairwise disjoint.

\begin{definition} \label{defn_syntax}
	The set of \emph{formulas} in $\HFL_{\Mod}^\B$ is generated by the grammar
	\[
		\varphi \Coloneqq \top \ | \ \bot \ | \ X \ | \ \varphi \land \varphi \ | \ \varphi \lor \varphi \ | \ \heartsuit(\underbrace{\varphi, \dots, \varphi}_{\text{$n$ copies of $\varphi$}}) \ | \ \lambda X^\tau.\varphi \ | \ \varphi\varphi \ | \ \mu X^\tau.\varphi \ | \ \nu X^\tau.\varphi,
	\]
	where $X$ ranges over $\mathrm{Var}$, $\heartsuit$ ranges over $\Mod$ and has arity $n \in \omega$, and $\tau$ ranges over $\mathrm{Types}$. We identify formulas up to $\alpha$-equivalence. For formulas $\varphi_1, \varphi_2, \varphi_3, \dots, \varphi_n \in \HFL_{\Mod}^\B$, we write
	\[
		\varphi_1\varphi_2\varphi_3\dots\varphi_n \coloneqq (\dots((\varphi_1\varphi_2)\varphi_3)\dots)\varphi_n
	\]
	so that $\lambda$-application associates to the left.
\end{definition}

Note that \cref{defn_syntax} above makes no use of the complete lattice $\B$. Instead, this complete lattice will make an appearance later when providing semantics to formulas in \cref{defn_semantics}. When providing these semantics, we will also be in a setting where the modalities in $\Mod$ depend on $\B$.

The semantics of a formula of the form $\lambda X^\tau.\varphi$ intuitively corresponds to a function which takes an input $X$ of type $\tau$ and returns $\varphi$ (where $X$ may be a free variable in $\varphi$). A formula of the form $\varphi\psi$ intuitively corresponds to the $\lambda$-application of $\varphi$ on input $\psi$. If the semantics of a formula $\lambda X^\tau.\varphi$ corresponds to a function which takes an input $X$ of type $\tau$ and returns an output $\varphi$ of the same type $\tau$, then the semantics of $\mu X^\tau.\varphi$ (resp. $\nu X^\tau.\varphi$) corresponds to the least (resp. greatest) fixed-point of the semantics of $\lambda X^\tau.\varphi$. The semantics provided later will ensure that these are actually least and greatest fixed-points.

The original formulation of HFL \citep{ViswanathanViswanathan2004} included negations. We have chosen to exclude negations, following a negation elimination result by \citet[Corollary~5]{Lozes2015} (see also \citet[Sections~3.1.1 and 3.1.2]{BrusePhDthesis2018}). This approach was adopted in, for instance, \citet*[Section~2.3]{KobayashiLozesBruse2017}.

\begin{definition} \label{defn_typing}
	A \emph{context} is a partial function $\Gamma \colon \mathrm{Var} \rightharpoonup \mathrm{Types}$ whose domain is finite. The \emph{type judgement relation} $\vdash$ is the smallest relation closed under the following derivation rules: for all contexts $\Gamma \colon \mathrm{Var} \rightharpoonup \mathrm{Types}$, all variables $X \notin \dom(\Gamma)$, all modalities $\heartsuit \in \Mod$ of arity $n \in \omega$, all formulas $\varphi,\psi, \varphi_1, \dots, \varphi_n \in \HFL_{\Mod}^\B$, and all types $\tau, \tau_1, \tau_2 \in \mathrm{Types}$,
	\[
		\begin{bprooftree}
			\AxiomC{}
			\RightLabel{\scriptsize $\top$}
			\UnaryInfC{$\Gamma \vdash \top : \bullet$}
		\end{bprooftree}
		\qquad \qquad
		\begin{bprooftree}
			\AxiomC{}
			\RightLabel{\scriptsize $\bot$}
			\UnaryInfC{$\Gamma \vdash \bot : \bullet$}
		\end{bprooftree}
		\qquad \qquad
		\begin{bprooftree}
			\AxiomC{}
			\RightLabel{\scriptsize $\mathrm{var}$}
			\UnaryInfC{$\Gamma, X:\tau \vdash X:\tau$}
		\end{bprooftree} 
	\] \\
	\[
		\begin{bprooftree}
			\AxiomC{$\Gamma \vdash \varphi : \bullet$}
			\AxiomC{$\Gamma \vdash \psi : \bullet$}
			\RightLabel{\scriptsize $\land$}
			\BinaryInfC{$\Gamma \vdash \varphi \land \psi : \bullet$}
		\end{bprooftree}
		\qquad \qquad \qquad
		\begin{bprooftree}
			\AxiomC{$\Gamma \vdash \varphi : \bullet$}
			\AxiomC{$\Gamma \vdash \psi : \bullet$}
			\RightLabel{\scriptsize $\lor$}
			\BinaryInfC{$\Gamma \vdash \varphi \lor \psi : \bullet$}
		\end{bprooftree}
	\] \\
	\[
		\begin{bprooftree}
			\AxiomC{$\Gamma \vdash \varphi_1 : \bullet$}
			\AxiomC{$\cdots$}
			\AxiomC{$\Gamma \vdash \varphi_n : \bullet$}
			\RightLabel{\scriptsize $\heartsuit$}
			\TrinaryInfC{$\Gamma \vdash \heartsuit(\varphi_1, \dots, \varphi_n) : \bullet$}
		\end{bprooftree}
	\] \\
	\[
		\begin{bprooftree}
			\AxiomC{$\Gamma,X:\tau_1 \vdash \varphi : \tau_2$}
			\RightLabel{\scriptsize $\lambda\mathrm{abs}$}
			\UnaryInfC{$\Gamma \vdash \lambda X^{\tau_1}.\varphi : \tau_1 \to \tau_2$}
		\end{bprooftree}
		\qquad \qquad \qquad 
		\begin{bprooftree}
			\AxiomC{$\Gamma \vdash \varphi : \tau_1 \to \tau_2$}
			\AxiomC{$\Gamma \vdash \psi : \tau_1$}
			\RightLabel{\scriptsize $\lambda\mathrm{app}$}
			\BinaryInfC{$\Gamma \vdash \varphi\psi : \tau_2$}
		\end{bprooftree}
	\] \\
	\[
		\begin{bprooftree}
			\AxiomC{$\Gamma, X:\tau \vdash \varphi : \tau$}
			\RightLabel{\scriptsize $\mu$}
			\UnaryInfC{$\Gamma \vdash \mu X^\tau.\varphi : \tau$}
		\end{bprooftree}
		\qquad \qquad \qquad
		\begin{bprooftree}
			\AxiomC{$\Gamma, X:\tau \vdash \varphi : \tau$}
			\RightLabel{\scriptsize $\nu$}
			\UnaryInfC{$\Gamma \vdash \nu X^\tau.\varphi : \tau$}
		\end{bprooftree}
	\]
	where $\Gamma, X:\tau$ is the partial function $\Gamma \cup \{X \mapsto \tau\}$.
	
	We say that a formula $\varphi$ is \emph{well-typed} if there exists a context $\Gamma$ and a type $\tau$ such that $\Gamma \vdash \varphi : \tau$ is derivable. A formula $\varphi$ is said to be \emph{closed} if $\varphi$ has no free variables. Furthermore, we say that a closed formula $\varphi$ \emph{has type $\tau$} if $\varnothing \vdash \varphi : \tau$ is derivable.
\end{definition}

An analogous ``Unique Types'' result as in \citet[Proposition~2]{ViswanathanViswanathan2004} can be obtained for the type judgement system of $\HFL_\Mod^\B$: an induction on the height of derivation trees shows that if $\Gamma$ is a context, $\varphi$ is a formula, $\tau$ and $\tau'$ are types, and both $\Gamma \vdash \varphi : \tau$ and $\Gamma \vdash \varphi : \tau'$ are derivable, then $\tau = \tau'$. Furthermore, if $\Gamma \vdash \varphi : \tau$ is derivable, then there is a unique derivation of $\Gamma \vdash \varphi : \tau$ up to $\alpha$-equivalence.

\begin{definition}
	For a closed and well-typed formula $\varphi$, we define its \emph{order}, denoted $\ord(\varphi)$ to be
	\[
		\ord(\varphi) \coloneqq \max\{\,\ord(\tau) \mid \text{$\tau$ is the type of a closed subformula of $\varphi$}\,\}.
	\]
\end{definition}

In particular, order-0 formulas in $\HFL_\Mod^\B$ can be identified with the closed formulas in the negation-free fragment of the coalgebraic $\mu$-calculus introduced by \cite*{CirsteaKupkePattinson2011}.

\begin{example}
	The following are examples of closed and well-typed formulas in $\HFL_\Mod^\B$ together with their orders, where $\heartsuit \in \Mod$ has arity 2.
	\begin{align*}
		\ord(\mu X^\bullet.(X \land \heartsuit(\top, X))) &= 0, \\
		\ord(\mu X^{\bullet \to \bullet}.(\lambda Y^\bullet.(XY))) &= 1, \\
		\ord(\lambda X^{\bullet \to \bullet}.\bot) &= 2, \\
		\ord((\lambda X^{\bullet \to \bullet}.\bot)(\lambda Y^\bullet.Y)) &= 2. \tag*{\qed}
	\end{align*}
\end{example}

We will now provide coalgebraic semantics for formulas in $\HFL_\Mod^\B$. Let $F \colon \Set \to \Set$ be a functor. An \emph{$F$-coalgebra} is a pair $\S = (S, \sigma)$ consisting of a set $S$ and a function $\sigma \colon S \to FS$. We call the set $S$ its \emph{state space}, and we call the function $\sigma \colon  S \to FS$ its \emph{structure map}. For two $F$-coalgebras $\S = (S, \sigma)$ and $\S' = (S', \sigma')$, an \emph{$F$-coalgebra morphism} from $\S$ to $\S'$ is a function $f \colon S \to S'$ such that $(Ff)\sigma = \sigma'f$. Thus, we have a category $\Coalg(F)$ with $F$-coalgebras as objects and $F$-coalgebra morphisms as morphisms. A \emph{pointed $F$-coalgebra} is an $F$-coalgebra $\S = (S, \sigma)$ together with a specified element $s_0 \in S$.

For a natural number $n \in \omega$, an \emph{$n$-ary predicate lifting for $F$ with respect to $\B$} is a natural transformation $\heartsuit \colon (\hom_\Set(-, \B))^n \Rightarrow \hom_\Set (F(-), \B)$, where $\hom_\Set(-, \B) \colon \Set^\op \to \Set$ is the contravariant representable functor with (the underlying set of) $\B$ as its representing object. Note that we allow for $n = 0$; \emph{nullary predicate liftings} are natural transformations from the constant functor $\{*\}$ to the functor $\hom_\Set(F(-), \B)$. An $n$-ary predicate lifting $\heartsuit \colon (\hom_\Set(-, \B))^n \Rightarrow \hom_\Set(F(-), \B)$ is said to be \emph{monotone} if: for all sets $S$ and all $f_1, \dots, f_n, g_1, \dots, g_n \in \hom_\Set(S, \B)$ satisfying $f_i(s) \leq_\B g_i(s)$ for all $s \in S$ and $i \in \{1, \dots, n\}$, we have $(\heartsuit_S(f_1, \dots, f_n))(t) \leq_\B (\heartsuit_S(g_1, \dots, g_n))(t)$ for all $t \in FS$.

For the remainder of this section, we fix a functor $F \colon \Set \to \Set$, a complete lattice $\B$, and a set $\Mod$ of monotone predicate liftings for $F$ with respect to $\B$. When we speak of \emph{modalities}, we mean monotone predicate liftings in $\Mod$.

\begin{definition}
	Fix an $F$-coalgebra $\S = (S, \sigma)$. Let $S_\bullet^\B \coloneqq \hom_\Set(S, \B)$ be the set of all functions from $S$ to $\B$. Equip $S_\bullet^\B$ with its usual partial order: for $f, g \in S_\bullet^\B$, declare $f \leq_{S_\bullet^\B} g$ if and only if $f(s) \leq_\B g(s)$ for all $s \in S$. For types $\tau_1$ and $\tau_2$, define
	\[
		S_{\tau_1 \to \tau_2}^\B \coloneqq \big\{\,f \in \hom_\Set(S_{\tau_1}^\B, S_{\tau_2}^\B) \mid \text{$f$ is (weakly) monotonically increasing}\,\big\},
	\]
	and equip $S_{\tau_1 \to \tau_2}^\B$ with the following partial order: for $f,g \in S_{\tau_1 \to \tau_2}^\B$, stipulate that $f \leq_{S_{\tau_1 \to \tau_2}^\B} g$ whenever $f(u) \leq_{S_{\tau_2}^\B} g(u)$ for all $u \in S_{\tau_1}^\B$. For a type $\tau$ and a function $f \in S_{\tau \to \tau}^\B$, let $\LFP_{S_\tau^\B}(f)$ and $\GFP_{S_\tau^\B}(f)$ respectively denote the least and greatest fixed points of $f$, that is,
	\[
		\LFP_{S_\tau^\B}(f) \coloneqq \inf\{\,u \in S_\tau^\B \mid f(u) \leq_{S_\tau^\B} u\,\} \quad \text{and} \quad \GFP_{S_\tau^\B}(f) \coloneqq \sup\{\,u \in S_\tau^\B \mid u \leq_{S_\tau^\B} f(u)\,\},
	\]
	where $\inf(-)$ and $\sup(-)$ respectively denote the infimum and supremum operations in $S_\tau^\B$.
\end{definition}

\begin{definition} \label{defn_semantics}
	Fix an $F$-coalgebra $\S = (S, \sigma)$.	 For a context $\Gamma = \{X_0 : \tau_0, \dots, X_k : \tau_k\}$, define $\llbracket \Gamma \rrbracket_\S^\B$ to be the set of all functions $\eta \colon \{X_0, \dots, X_k\} \to \bigcup_{i \in \{0,\dots,k\}} S_{\tau_i}^\B$ such that $\eta(X_i) \in S_{\tau_i}^\B$ for all $i \in \{0,\dots,k\}$. For a type judgement $\Gamma \vdash \varphi \colon \tau$, define the function $\llbracket \Gamma \vdash \varphi \colon \tau \rrbracket_\S^\B \colon \llbracket \Gamma \rrbracket_\S^\B \to S_\tau^\B$ by induction on $\varphi$ as follows.
	\begin{align*}
		\llbracket \Gamma \vdash \top : \bullet \rrbracket_\S^\B(\eta) &\coloneqq \top_{S_\bullet^\B}, \\
		\llbracket \Gamma \vdash \bot : \bullet \rrbracket_\S^\B(\eta) &\coloneqq \bot_{S_\bullet^\B}, \\
		\llbracket \Gamma, X : \tau \vdash X : \tau \rrbracket_\S^\B(\eta) &\coloneqq \eta(X), \\
		\llbracket \Gamma \vdash \varphi \land \psi : \bullet \rrbracket_\S^\B(\eta) &\coloneqq \inf\big\{\llbracket \Gamma \vdash \varphi : \bullet \rrbracket_\S^\B(\eta), \ \llbracket \Gamma \vdash \psi : \bullet \rrbracket_\S^\B(\eta)\big\}, \\
		\llbracket \Gamma \vdash \varphi \lor \psi : \bullet \rrbracket_\S^\B(\eta) &\coloneqq \sup\big\{\llbracket \Gamma \vdash \varphi : \bullet \rrbracket_\S^\B(\eta), \, \llbracket \Gamma \vdash \psi : \bullet \rrbracket_\S^\B(\eta)\big\}, \\
		\llbracket \Gamma \vdash \heartsuit(\varphi_1, \dots, \varphi_n) : \bullet \rrbracket_\S^\B(\eta) 
				&\coloneqq \hom_\Set(\sigma, \B)\Big(\heartsuit_S\big(\llbracket \Gamma \vdash \varphi_1 \colon \bullet\rrbracket_\S^\B(\eta), \dots, \llbracket \Gamma \vdash \varphi_n \colon \bullet\rrbracket_\S^\B(\eta)\big)\,\Big) \\
				&\,= \heartsuit_S\big(\llbracket \Gamma \vdash \varphi_1 \colon \bullet\rrbracket_\S^\B(\eta), \dots, \llbracket \Gamma \vdash \varphi_n \colon \bullet\rrbracket_\S^\B(\eta)\big) \circ \sigma, \\
		\Big(\llbracket \Gamma \vdash \lambda X^{\tau_1}.\varphi : \tau_1 \to \tau_2 \rrbracket_\S^\B(\eta)\Big)(f) &\coloneqq \llbracket \Gamma, X : \tau_1 \vdash \varphi : \tau_2 \rrbracket_\S^\B\big(\eta[X \mapsto f]\big), \\
		&\text{where } (\eta[X \mapsto f])(X) \coloneqq f \\
		&\text{and } (\eta[X \mapsto f])(Y) \coloneqq \eta(Y) \text{ for all } Y \in \dom(\eta) \setminus\{X\}, \\
		\llbracket \Gamma \vdash \varphi \psi : \tau_2 \rrbracket_\S^\B(\eta) &\coloneqq \Big(\llbracket \Gamma \vdash \varphi \colon \tau_1 \to \tau_2 \rrbracket_\S^\B(\eta)\Big)\Big(\llbracket \Gamma \vdash \psi : \tau_1 \rrbracket_\S^\B(\eta)\Big), \\
		\llbracket \Gamma \vdash \mu X^\tau.\varphi : \tau \rrbracket_\S^\B(\eta) &\coloneqq \LFP_{S_\tau^\B}\Big(\llbracket \Gamma \vdash \lambda X^\tau.\varphi : \tau \to \tau \rrbracket_\S^\B(\eta)\Big), \\
		\llbracket \Gamma \vdash \nu X^\tau.\varphi : \tau \rrbracket_\S^\B(\eta) &\coloneqq \GFP_{S_\tau^\B}\Big(\llbracket \Gamma \vdash \lambda X^\tau.\varphi : \tau \to \tau \rrbracket_\S^\B(\eta)\Big).
	\end{align*}
	For a closed formula $\varphi$ of type $\tau$, we denote $\llbracket\varphi\rrbracket_\S^\B \coloneqq \llbracket \varnothing \vdash \varphi : \tau\rrbracket_\S^\B(\eta_\varnothing)$, where $\eta_\varnothing \colon \varnothing \to \varnothing$ is the empty function. For $s \in S$ and for a closed formula $\varphi$ of type $\bullet$, we write $\S, s \Vdash^\B \varphi$ if and only if $\llbracket \varphi \rrbracket_\S^\B(s) = \top_\B$.
\end{definition}

Let $\P \colon \Set \to \Set$ denote the covariant power set functor, and $\hat \P \colon \Set^\op \to \Set$ denote the contravariant power set functor.

\begin{example}
	Consider the complete lattice $\mathbbm 2 \coloneqq \{0, 1\}$ consisting of exactly two values, with $0 <_{\mathbbm 2} 1$. Note that $\hom_\Set(-, \mathbbm 2)$ is naturally isomorphic to the contravariant power set functor $\hat \P$, so we shall work with the latter instead.
	
	Let $\Sigma$ be a non-empty set, and let $L \colon \Set \to \Set$ be the functor defined as follows: on objects, it assigns a set $S$ to the set $LS \coloneqq (\P S)^\Sigma = \hom_\Set(\Sigma, \P S)$; on morphisms, it assigns a function $f \colon S \to S'$ to the function $Lf \colon (\P S)^\Sigma \to (\P S')^\Sigma$ defined by $(Lf)(h) \coloneqq (\P f) \circ h$. For each $a \in \Sigma$, define a unary modality $\langle a \rangle \colon \hat \P \Rightarrow \hat \P L$ by
	\[
		\langle a \rangle_S(U) \coloneqq \{\,h \in (\P S)^\Sigma \mid h(a) \cap U \neq \varnothing\,\},
	\]
	so that $\llbracket\langle a \rangle \varphi\rrbracket_\S^{\mathbbm 2} = \{\,s \in S \mid \sigma(s)(a) \cap \llbracket \varphi \rrbracket_\S^{\mathbbm 2} \neq \varnothing\,\}$ for any closed and well-typed formula $\varphi$ of type $\bullet$ and any $L$-coalgebra $\S = (S, \sigma)$.
	
	Fix an $L$-coalgebra $\S = (S, \sigma)$ and a point $s_0 \in S$. This models a labelled transition system that has: $S$ as its state space, initial state $s_0$; and a transition relation $\xrightarrow{a}\ \subseteq S \times S$ for each $a \in \Sigma$, by declaring $s \xrightarrow{a} t$ if and only if $t \in \sigma(s)(a)$. Fix $a, b \in \Sigma$ and consider the formula
	\[
		\varphi \coloneqq \Big(\nu G^{(\bullet \to \bullet) \to \bullet}.\lambda X^{\bullet\to\bullet}.\langle a \rangle\big(X(G(\lambda Y^\bullet.\langle b \rangle (XY)))\big)\Big)(\lambda Y^\bullet.\langle b \rangle Y),
	\]
	which appears in \citet*[Example~4]{KobayashiLozesBruse2017}. Then $\S,s_0 \Vdash^{\mathbbm 2} \varphi$ if and only if there exists an infinite transition path of the form
	\[
		s_0 \xrightarrow{a} s_1 \xrightarrow{b} t_{1,1} \xrightarrow{a} s_2 \xrightarrow{b} t_{2,1} \xrightarrow{b} t_{2,2} \xrightarrow{a} \cdots \underbrace{\xrightarrow{a} s_n \xrightarrow{b} t_{n,1} \xrightarrow{b} \cdots \xrightarrow{b} t_{n,n}}_{\text{one $a$ followed by $n$ instances of $b$}} \xrightarrow{a} s_{n+1} \xrightarrow{b} \cdots
	\]
	in the labelled transition system modelled by $(\S, \sigma)$ starting with $s_0$. \qed
\end{example}

\begin{example} \label{Example_Markov}
	Let $[0, 1]$ denote the unit interval in $\mathbb R$ equipped with its usual linear order. 
	
	Let $\mathrm{Prop}$ be a non-empty set of propositional variables, and let $M\colon \Set \to \Set$ be the functor defined as follows: on objects, it assigns a set $S$ to the set 
	\[
	MS \coloneqq \left\{\,\prob \in [0,1]^S \mid \sum_{s \in S} \prob(s) = 1\,\right\} \times \P(\mathrm{Prop});
	\]
	on morphisms, it assigns a function $f \colon S \to S'$ to the function $Mf \colon \{\,\prob \in [0,1]^S \mid \sum_{s \in S} \prob(s) = 1\,\} \times \P(\mathrm{Prop}) \to \{\,\prob \in [0,1]^{S'} \mid \sum_{s \in S} \prob(s) = 1\,\} \times \P(\mathrm{Prop})$ defined by
	\[
		(Mf)(\prob, U) \coloneqq \left(\left(S' \ni s' \mapsto \sum_{\substack{s \in S, \\ f(s) = s'}}\prob(s) \in [0,1]\right),\, U\right).
	\]
	For each $p \in \mathrm{Prop}$, define a nullary modality $p \colon \{*\} \Rightarrow \hom_\Set(M(-), [0,1])$ (where we have abused notation by using the same symbol $p$ for both the element of $\mathrm{Prop}$ and the modality) as follows: for each set $S$, we identify $p_S$ with the function in $\hom_\Set(MS, [0,1])$ defined by
	\[
		p_S(\prob, U) \coloneqq 
		\begin{cases}
			1, \quad &\text{if } p \in U, \\
			0, \quad &\text{otherwise}.
		\end{cases}
	\]
	Also define a unary modality $\Circle \colon \hom_\Set(-, [0,1]) \Rightarrow \hom_\Set(M(-),[0,1])$ by
	\[
		\Circle_S(f)(\prob, U) \coloneqq \sum_{s \in S} f(s)\prob(s).
	\]
	
	Fix an $M$-coalgebra $\S = (S, \sigma)$ and a point $s_0 \in S$. For $s \in S$, write $\sigma(s) = (\prob_s, U_s)$. This pointed $M$-coalgebra $(\S, s_0)$ models a Markov chain that has: state space $S$; initial state $s_0$; for each $s, t \in S$, a probability $\prob_s(t)$ of transitioning from $s$ to $t$; and for each $s \in S$, a set $U_s \subseteq \mathrm{Prop}$ of propositional variables which $s$ satisfies. Fix $p \in \mathrm{Prop}$ and consider the formula
	\[
		\varphi \coloneqq \big(\mu G^{\bullet \to \bullet}.\lambda X^\bullet.(X \lor G(\Circle X))\big)p,
	\]
	which appears in \citet*[Example~2.8]{MitaniKobayashiTsukada2021}. Then $\S,s_0 \Vdash^{[0,1]} \varphi$ if and only if
	\[
		\text{$s_0$ satisfies $p$} \quad \text{or} \quad \sup_{1 \leq n < \omega}\left(\sum_{(s_1, \dots, s_n) \in S^n} \left(p_S(\sigma(s_n))\prod_{0 \leq i \leq n-1}\prob_{s_i}(s_{i+1})\right)\right) = 1.
	\]
	That is: for all $\varepsilon > 0$ there exists $n < \omega$ such that, after $n$ transitions starting from $s_0$, the probability of landing in a state which satisfies $p$ is larger than $1 - \varepsilon$. \qed
\end{example}

\begin{remark}
    The full probabilistic HFL (PHFL) defined in \citet*{MitaniKobayashiTsukada2021} is not yet captured by our coalgebraic semantics, due to PHFL having formulas of the form $[\varphi]_J$ for an upwards-closed subset $J$ of $\B = [0, 1]$. To account for this, we will have to extend \cref{defn_syntax} to
    \[
        \varphi \Coloneqq \ \cdots \ | \ [\varphi]_J
    \]
    where $J$ ranges over the set ${{\B}{\uparrow}} \coloneqq \{\,V \subseteq \B : \text{for all } x, y \in \B, \text{ if } V \ni x \leq_\B y, \text{ then } y \in V\,\}$. \cref{defn_typing} is then extended with the typing rule
    \[
		\begin{bprooftree}
			\AxiomC{$\Gamma \vdash \varphi : \bullet$}
			\RightLabel{\scriptsize $J$}
			\UnaryInfC{$\Gamma \vdash [\varphi]_J : \bullet$}
		\end{bprooftree}
    \]
    for all $J \in {{\B}{\uparrow}}$. Finally, \cref{defn_semantics} is extended with
    \[
        \llbracket \Gamma \vdash [\varphi]_J : \bullet \rrbracket_\S^\B(\eta)(s) \coloneqq 
        \begin{cases}
            \top_\B, \quad &\text{if } \llbracket \Gamma \vdash \varphi : \bullet \rrbracket_\S^\B(\eta)(s) \in J, \\
            \bot_\B, \quad &\text{otherwise}.
        \end{cases}
    \]
\end{remark}

\begin{example}
	Let $\omega + 1$ denote the second infinite ordinal, equipped with its usual linear order.

	Let $T \colon \Set \to \Set$ be the functor defined as follows: on objects, it assigns a set $S$ to the set $TS \coloneqq (S \times S) + S + 1$, where $1 \coloneqq \{*\}$ is a singleton set; on morphisms, it assigns a function $f \colon S \to S'$ to the function $Tf \colon (S \times S) + S + 1 \to (S' \times S') + S' + 1$ defined by
	\[
		(Tf)(s) \coloneqq
		\begin{cases}
			(f(s_1), f(s_2)), \quad &\text{if } s = (s_1, s_2) \in S \times S, \\
			f(s), \quad &\text{if } s \in S, \\
			*, \quad &\text{if } s = * \in 1.
		\end{cases}
	\]
	Define unary modalities $\langle \mathtt{left} \rangle, \langle \mathtt{right} \rangle \colon \hom_\Set(-, \omega+1) \Rightarrow \hom_\Set(T(-), \omega + 1)$ by
	\begin{align*}
		\langle \mathtt{left} \rangle_S (h) (s) &\coloneqq
		\begin{cases}
			h(s_1) + 1, \quad &\text{if } s = (s_1, s_2) \in S \times S \text{ and } h(s_1) < \omega, \\
			\omega, \quad &\text{if } s = (s_1, s_2) \in S \times S \text{ and } h(s_1) = \omega, \\
			h(s) + 1, \quad &\text{if } s \in S \text{ and } h(s) < \omega, \\
			\omega, \quad &\text{if } s \in S \text{ and } h(s) = \omega, \\
			0, \quad &\text{if } s = * \in 1,
		\end{cases} \\
		\langle \mathtt{right} \rangle_S(h)(s) &\coloneqq
		\begin{cases}
			h(s_2) + 1, \quad &\text{if } s = (s_1, s_2) \in S \times S \text{ and } h(s_2) < \omega, \\
			\omega, \quad &\text{if } s = (s_1, s_2) \in S \times S \text{ and } h(s_2) = \omega, \\
			h(s) + 1, \quad &\text{if } s \in S \text{ and } h(s) < \omega, \\
			\omega, \quad &\text{if } s \in S \text{ and } h(s) = \omega, \\
			0, \quad &\text{if } s = * \in 1.
		\end{cases}
	\end{align*}
	
	Fix a $T$-coalgebra $\S = (S, \sigma)$ and a point $s_0 \in S$. This models a (finite or infinite) tree where every node is labelled by an element of $S$ and has at most two successors, and whose root is labelled by $s_0$. Consider the formula
	\[
		\varphi \coloneqq \Big(\mu G^{\bullet \to \bullet \to \bullet}.\lambda X^\bullet.\lambda Y^\bullet.\big((X \land Y) \lor G(\langle \mathtt{left} \rangle X)(\langle \mathtt{right} \rangle Y)\big)\Big)\bot\bot,
	\]
	adapted from \citet*[Section~2.1]{LangeLozesGuzman2014}. Let $n_\ell, n_r \leq \omega$ respectively denote the length (i.e. number of edges) of the left-most branch (i.e. maximal path) and the right-most branch of the tree modelled by the pointed $T$-coalgebra $(\S, s_0)$. Then $\llbracket \varphi \rrbracket_\S^{\omega+1}(s_0) = \min(n_\ell, n_r)$. \qed
\end{example}

\section{The top-value problem} \label{sec:TopValueProblem}

Throughout this section, we fix a functor $F \colon \Set \to \Set$, a complete lattice $\B$, and a set of modalities $\Mod$ for $F$ with respect to $\B$. Furthermore, we shall suppose that the functor $F$ preserves inclusions and weak wide pullbacks. In particular, if $X$ and $Y$ are sets satisfying $X \subseteq Y$, then $FX \subseteq FY$.

\begin{definition}
	For a set $X$ and for $A \in FX$, the \emph{support of $A$}, denoted $\suppt(A)$, is defined to be
	\[
		\suppt(A) \coloneqq \bigcap\{\,S \subseteq X : A \in FS\,\}.
	\]
\end{definition}

Functors which preserve inclusions and weak wide pullbacks will allow a well-behaved notion of a support, and they admit the following two canonical unary monotone predicate liftings.

\begin{definition}\label{BoxDiamond}
	Define two unary predicate liftings $\Box, \Diamond \colon \hom_\Set(-, \B) \Rightarrow \hom_\Set(F(-), \B)$ as follows: for all sets $S$, all functions $h \colon S \to \B$, and all $A \in FS$, let
	\begin{align*}
		\Box_S(h)(A) &\coloneqq \inf\{\,h(s) : s \in \suppt(A)\,\}, \text{ and} \\
		\Diamond_S(h)(A) &\coloneqq \sup\{\,h(s) : s \in \suppt(A)\,\}.
	\end{align*}
\end{definition}

The naturality of $\Box$ and $\Diamond$ in \cref{BoxDiamond} above follows from the functor $F$ preserving inclusions and weak wide pullbacks. Moving forward, we shall assume, without loss of generality, that the set $\Mod$ contains both $\Box$ and $\Diamond$.

\begin{definition}\label{FBautomaton}
	An \emph{$(F, \B)$-automaton} is a quintuple $\mathbb A = (Q, \Sigma, \Delta, q_0, \heartsuit)$ consisting of:
	\begin{enumerate}
		\item a non-empty set $Q$ of \emph{states};
		\item a non-empty set $\Sigma$ of \emph{input symbols};
		\item an $((F(-))^\Sigma \times \mathbbm 2)$-coalgebra $(Q, \Delta)$, where $\mathbbm 2 \coloneqq \{0,1\}$;
		\item an \emph{initial state} $q_0 \in Q$;
		\item and a unary modality $\heartsuit \in \Mod$.
	\end{enumerate}
	A \emph{finite $(F, \B)$-automaton} is an $(F, \B)$-automaton $\mathbb A = (Q, \Sigma, \Delta, q_0, \heartsuit)$ where both $Q$ and $\Sigma$ are finite.
	
	For an $(F, \B)$-automaton $\mathbb A = (Q, \Sigma, \Delta, q_0, \heartsuit)$, a state $q \in Q$, and a finite word $w \in \Sigma^*$, the \emph{value of the run of $w$ on $\mathbb A$ starting from $q$}, denoted $\mathbb A(q, w) \in \B$, is defined inductively on $w$ as follows:
	\begin{align*}
		\mathbb A(q, \varepsilon) &\coloneqq
		\begin{cases}
			\top_\B, \quad &\text{if $\pi_2(\Delta(q)) = 1$}, \\
			\bot_\B, \quad &\text{if $\pi_2(\Delta(q)) = 0$},
		\end{cases}  \\
		\mathbb A(q, cw) &\coloneqq \Big(\heartsuit_Q(\mathbb A(-, w))\Big)\big((\pi_1(\Delta(q)))(c)\big) \qquad \text{for $c \in \Sigma$ and $w \in \Sigma^*$},
	\end{align*}
	where $(FQ)^\Sigma \xleftarrow{\pi_1} (FQ)^\Sigma \times \mathbbm 2 \xrightarrow{\pi_2} \mathbbm 2$ are the relevant projections, and $\varepsilon$ is the empty word. We then define the \emph{value of $\mathbb A$} to be
	\[
		\mathrm{val}(\mathbb A) \coloneqq \sup_{w \in \Sigma^*} \mathbb A(q_0, w).
	\]
\end{definition}

Note that an $(F, \B)$-automaton as in \cref{FBautomaton} above is \textit{neither} a generalisation of the notion of an alternating $F$-automaton in the sense of \citet[Definition~3.1]{Venema2004} \textit{nor} that of a $\Lambda$-automaton in the sense of \citet*[Definition~4]{FontaineLealVenema2010}. Instead, \cref{FBautomaton} intends to generalise nondeterministic finite automata (NFAs) as well as the probabilistic automata introduced by \citet[Definition~4]{Rabin1963}.

Given an $(F, \B)$-automaton $\mathbb A = (Q, \Sigma, \Delta, q_0, \heartsuit)$, its \emph{top-value problem} asks whether $\mathrm{val}(\mathbb A) = \top_\B$. If $F$ is the (covariant) power set functor, $\B = \mathbbm 2$, and $\heartsuit = \Diamond$, then the associated top-value problem is simply the usual (non-)emptiness problem for NFAs. On the other hand, if $F$ is the probability distribution functor, $\B = [0, 1]$, and $\heartsuit$ is the expectation modality (i.e. the $\Circle$ modality in \citet*[Definition~2.6]{MitaniKobayashiTsukada2021}), then the associated top-value problem is the value-1 problem for probabilistic automata (see, for instance, \citet[Problem~5]{GimbertOualhadj2010}).

We generalise a result in \citet*[Theorem~3.5]{MitaniKobayashiTsukada2021} by reducing the top-value problem of a finite $(F, \B)$-automaton $\mathbb A = (Q, \Sigma, \Delta, q_0, \heartsuit)$ to the model-checking problem of an $\HFL_{\widetilde{\Mod}}^\B$ formula on a pointed $(F(-) \times \P(\Sigma+\{\cmark\}))$-coalgebra. Here, $\widetilde{\Mod}$ is a set of modalities for $F(-) \times \P(\Sigma + \{\cmark\})$ with respect to $\B$ defined as follows.
\begin{enumerate}
	\item For each $\heartsuit \in \Mod$ of arity $n \in \omega$, define $\widetilde\heartsuit \colon (\hom_\Set(-, \B))^n \Rightarrow \hom_\Set(F(-) \times \P(\Sigma +\{\cmark\}), \B)$ by
		\[
			\widetilde\heartsuit_S(h_1, \dots, h_n)(A, U) \coloneqq \heartsuit_S(h_1, \dots, h_n)(A)
		\]
		for all sets $S$, all functions $h_1, \dots, h_n \colon S \to \B$, all $A \in FS$, and all $U \subseteq \Sigma + \{\cmark\}$.
	\item For each $c \in \Sigma + \{\cmark\}$, define a nullary modality $c$ as follows: for each set $S$, we identify $c_S \colon \{*\} \to \hom_\Set(FS \times \P(\Sigma+\{\cmark\}), \B)$ with the function $c_S \colon FS \times \P(\Sigma+\{\cmark\}) \to \B$ given by
	\[
		c_S(A, U) \coloneqq
		\begin{cases}
			\top_\B, \quad &\text{if } c \in U, \\
			\bot_\B, \quad &\text{otherwise},
		\end{cases}
	\]
	for all $A \in FS$ and all $U \subseteq \Sigma + \{\cmark\}$.
	\item Let $\widetilde{\Mod} \coloneqq \{\,\widetilde\heartsuit : \heartsuit \in \Mod\,\} \cup (\Sigma + \{\cmark\})$.
\end{enumerate}
In essence, $(F(-) \times \P(\Sigma +\{\cmark\}))$-coalgebras are $F$-coalgebras which allow propositional constants in $(\Sigma + \{\cmark\})$. Here, $\cmark$ is a new symbol which we will later use to label a state in $\mathbb A$ which is accepting. The set $\widetilde \Mod$ is then the set of modalities consisting of all the modalities in $\Mod$, appropriately modified for $(F(-) \times \P(\Sigma +\{\cmark\}))$-coalgebras, together with one nullary modality for each propositional constant in $(\Sigma + \{\cmark\})$.

\begin{theorem} \label{TopValueProblem_to_HFL}
	Let $\mathbb A = (Q, \Sigma, \Delta, q_0, \heartsuit)$ be a finite $(F, \B)$-automaton. Furthermore, suppose that for each $q \in Q$ there exists $A_q \in F(Q \times \Sigma)$ such that $\suppt(A_q) = \{q\} \times \Sigma$. Define an $(F(-) \times \P(\Sigma + \{\cmark\}))$-coalgebra $\mathbb S = (S, \sigma)$ as follows:
	\begin{align*}
		S &\coloneqq Q + (Q \times \Sigma), \\
		\sigma(s) &\coloneqq 
		\begin{cases}
			(A_q,\, \{\cmark\}), \quad &\text{if } s = q \in Q \text{ and } \pi_2(\Delta(q)) = 1, \\
			(A_q,\, \varnothing), \quad &\text{if } s = q \in Q \text{ and } \pi_2(\Delta(q)) = 0, \\
			\big((\pi_1(\Delta(q)))(c),\, \{c\}\big), \quad &\text{if } s = (q, c) \in Q \times \Sigma,
		\end{cases}
	\end{align*}
	where $(FQ)^\Sigma \xleftarrow{\pi_1} (FQ)^\Sigma \times \mathbbm 2 \xrightarrow{\pi_2} \mathbbm 2$ are the relevant projections. Then
	\[
		\left\llbracket \left(\mu G^{\bullet \to \bullet}.\lambda X^\bullet.\left(X \lor \bigvee_{c \in \Sigma} G(\widetilde\Diamond(c \land \widetilde\heartsuit X))\right)\right)\cmark \right\rrbracket_\S^\B(q_0) = \mathrm{val}(\mathbb A).
	\]
	In particular, $\mathrm{val}(\mathbb A) = \top_\B$ if and only if $\S, q_0 \Vdash^\B \Big(\mu G^{\bullet \to \bullet}.\lambda X^\bullet.\big(X \lor \bigvee_{c \in \Sigma} G(\widetilde\Diamond(c \land \widetilde\heartsuit X))\big)\Big)\cmark$.
\end{theorem}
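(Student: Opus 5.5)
The plan is to compute the least fixed point $g := \llbracket \mu G^{\bullet\to\bullet}.\lambda X^\bullet.(X \lor \bigvee_{c} G(\widetilde\Diamond(c \land \widetilde\heartsuit X)))\rrbracket_\S^\B \in S^\B_{\bullet\to\bullet}$ explicitly. First I evaluate the one-step operator $\Phi(u) := \llbracket\widetilde\Diamond(c \land \widetilde\heartsuit X)\rrbracket$ with $X\mapsto u$. At a state $q\in Q$ we have $\suppt(A_q)=\{q\}\times\Sigma$, so the diamond takes the supremum over $(q,d)$, $d\in\Sigma$. At $(q,d)$ the nullary $c$ gives $\top_\B$ iff $d=c$, and $\widetilde\heartsuit u$ gives $\heartsuit_S(u)(\pi_1(\Delta(q))(d))$. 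Here I use $FQ\subseteq FS$ (inclusion preservation) and naturality of $\heartsuit$ along $Q\hookrightarrow S$, which shows that this equals $\heartsuit_Q(u|_Q)(\pi_1(\Delta(q))(d))$. Hence, writing $T_c(u)(q) := \heartsuit_Q(u|_Q)(\pi_1(\Delta(q))(c))$, we obtain $\Phi_c(u)(q) = T_c(u)(q)$ on $Q$ (the inf with $\bot$ for $d\neq c$ kills those terms). Values on $Q\times\Sigma$ are irrelevant for what follows, since only restrictions to $Q$ feed into later steps.

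Next I characterise $g$ by Kleene iteration from $\bot$: $g_0=\bot$, $g_{k+1}(u) = u \lor \bigvee_c g_k(\Phi_c u)$. By induction, $g_{k}(u)|_Q = \bigvee_{w\in\Sigma^{<k}} (T_w u)|_Q$, where $T_{c_1\cdots c_n} = T_{c_1}\circ\cdots\circ T_{c_n}$ with the innermost operator applied first, so the first letter of the word is applied last. The limit $\bigvee_k g_k$ is a pre-fixed point below every fixed point. To show that it is itself a fixed point I would rather avoid continuity, since $\heartsuit$ need not preserve directed joins. Instead I argue directly. Define $h(u)$ to be $\bigvee_{w\in\Sigma^*} T_w u$ on $Q$, and on $Q\times\Sigma$ by the expression that the fixed-point equation forces, namely $u \lor \bigvee_c h(\Phi_c u)$ there. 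Then $h$ is monotone and is a fixed point, because $\bigvee_w T_w u = u\lor\bigvee_c\bigvee_w T_w(T_c u)$ on $Q$ just by regrouping words as $wc$; here joins commute with joins and no continuity is needed. Conversely, any fixed point $f$ satisfies $f(u)\ge T_w u$ on $Q$ for every $w$, by induction on $|w|$ using unfolding and monotonicity. So $h$ is below every fixed point, giving $g=h$ on $Q$. This ``no continuity needed'' step is the main point to get right.

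Finally, evaluate at $u=\llbracket\cmark\rrbracket$, which restricts on $Q$ to $q\mapsto \mathbb A(q,\varepsilon)$. By induction on $w$, using naturality as above, $T_w u(q) = \mathbb A(q,w^{R})$, where the reversal comes from the composition order. Since $\sup$ over all words equals $\sup$ over reversed words, $g(u)(q_0)=\sup_w \mathbb A(q_0,w)=\mathrm{val}(\mathbb A)$. In fact the orders match directly: $\mathbb A(q,cw)=T_c(\mathbb A(-,w))(q)$, so $T_{c_1}\cdots T_{c_n}u = \mathbb A(-,c_1\cdots c_n)$ with no reversal at all. The ``in particular'' clause is then immediate from the definition of $\Vdash^\B$.
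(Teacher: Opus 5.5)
Your argument is correct up to one fixable imprecision, and it shares the paper's two computational ingredients. The first is the one-step identity $\llbracket\widetilde\Diamond(c\land\widetilde\heartsuit X)\rrbracket(q)=\heartsuit_Q(u\restriction_Q)\big((\pi_1(\Delta(q)))(c)\big)$, obtained from $\suppt(A_q)=\{q\}\times\Sigma$ and naturality of $\heartsuit$ along $Q\hookrightarrow S$. The second is the induction on words that identifies the iterated operators with $\mathbb A(q,w)$.

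You differ from the paper in how you handle $\mu G$. The paper writes the least fixed point as $\sup_{n<\omega}$ of the Kleene approximants. You instead exhibit the candidate $u\mapsto\bigvee_{w\in\Sigma^*}T_w u$ and check directly that it is a fixed point (by regrouping words as $wc$) and lies below every fixed point (by unfolding). Your route is self-contained and makes explicit what the paper leaves tacit; the paper's route is shorter.

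Your stated reason for avoiding Kleene iteration is not a real obstacle, however. The variable $G$ occurs only in head position, and joins in $S^\B_{\bullet\to\bullet}$ are computed pointwise. So the functional $G\mapsto\lambda u.\,u\lor\bigvee_c G(\Phi_c u)$ preserves all non-empty joins whether or not $\heartsuit$ preserves directed joins. Iteration from $\bot$ therefore does reach the least fixed point at stage $\omega$, and the paper's expansion is sound.

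The step that needs repair is your definition of $h$ on $Q\times\Sigma$. Calling it ``the expression the fixed-point equation forces, namely $u\lor\bigvee_c h(\Phi_c u)$'' is circular. Moreover, the equation does not force a value there: taking $h(u)$ to be $\top_\B$ on $Q\times\Sigma$ also satisfies it. Your inequality $g\le h$ needs $u\lor\bigvee_c h(\Phi_c u)\le h(u)$ at every state of $S$, not only on $Q$, so you must make an explicit choice.

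The simplest fix is to observe that $\Phi_c u$ is $\bot_\B$ on $Q\times\Sigma$. The support of $(\pi_1(\Delta(q)))(d)\in FQ$ lies in $Q$, and the nullary modality $c\in\Sigma$ is $\bot_\B$ at every $q'\in Q$, because the label of $q'$ is contained in $\{\cmark\}$. Hence setting $h(u)(q,d)\coloneqq u(q,d)$ works. Equivalently, define $h(u)\coloneqq\bigvee_{w\in\Sigma^*}\Phi_w u$ on all of $S$, with $\Phi_{c_1\cdots c_n}=\Phi_{c_1}\circ\cdots\circ\Phi_{c_n}$; your regrouping and unfolding arguments then go through verbatim.

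Finally, drop the initial remark about a word reversal. As you note yourself at the end, $T_{c_1}\cdots T_{c_n}u=\mathbb A(-,c_1\cdots c_n)$ directly.
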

\begin{proof}
	For $c \in \Sigma$, define the formula
	\[
		f_c \coloneqq \lambda X^\bullet.\widetilde\Diamond(c \land \widetilde\heartsuit X),
	\]
    of type $\bullet \to \bullet$, so that
	\begin{align*}
		&\left\llbracket\mu G^{\bullet \to \bullet}.\lambda X^\bullet.\left(X \lor \bigvee_{c \in \Sigma} G(\widetilde\Diamond(c \land \widetilde\heartsuit X))\right)\right\rrbracket_\S^\B \\
		&= \left\llbracket \mu G^{\bullet \to \bullet}.\lambda X^\bullet.\left(X \lor \bigvee_{c \in \Sigma} G(f_c X)\right) \right\rrbracket_\S^\B \\
		&= \sup_{n < \omega} \left\llbracket \lambda X^\bullet.\left(X \lor \bigvee_{c_1 \in \Sigma} \left(f_{c_1} X \lor \bigvee_{c_2 \in \Sigma} \left(f_{c_1}f_{c_2}X \lor \left(\dots \lor \bigvee_{c_n \in \Sigma} f_{c_1} \dots f_{c_n} X \right) \dots \right)\right)\right) \right\rrbracket_\S^\B \\
		&= \sup_{n < \omega} \sup_{w \in \Sigma^{\leq n}}\left\llbracket \lambda X^\bullet.(f_{w_1}(f_{w_2}(\dots(f_{w_{|w|}}X)\dots)))\right\rrbracket_\S^\B,
	\end{align*}
	where we have written $w = w_1 \dots w_{|w|}$ for $w \in \Sigma^{\leq n}$, and the suprema above are taken in $S_{\bullet \to \bullet}^\B$. This gives us
	\[
		\left\llbracket \left(\mu G^{\bullet \to \bullet}.\lambda X^\bullet.\left(X \lor \bigvee_{c \in \Sigma} G(\widetilde\Diamond(c \land \widetilde\heartsuit X))\right)\right)\cmark \right\rrbracket_\S^\B(q_0) = \sup_{n < \omega} \sup_{w \in \Sigma^{\leq n}} \llbracket f_{w_1}(f_{w_2}(\dots (f_{w_{|w|}} \cmark) \dots )) \rrbracket_\S^\B(q_0).
	\]
	
	We will now establish, by an induction on $n < \omega$, that the following equality holds for all states $q \in Q$ and all words $w = w_1 \dots w_n \in \Sigma^{n}$:
	\[
		\llbracket f_{w_1}(\dots(f_{w_n}\cmark)\dots) \rrbracket_\S^\B(q) = \mathbb A(q, w).
	\]
	The equality is obvious when $w$ is the empty word, by definition of the value $\mathbb A(q, \varepsilon)$. Suppose inductively that $\llbracket f_{w_1'}(\dots(f_{w_n'}\cmark)\dots) \rrbracket_\S^\B(q) = \mathbb A(q, w')$ for all states $q \in Q$ and a word $w' = w_1'\dots w_n' \in \Sigma^n$. Then for all $c \in \Sigma$, we obtain the following string of equalities:
	\begin{align*}
		&\llbracket f_c(f_{w_1'}(\dots(f_{w_n'}\cmark)\dots)) \rrbracket_\S^\B(q) \\
		&= \llbracket \widetilde\Diamond(c \land \widetilde\heartsuit(f_{w_1'}(\dots(f_{w_n'}\cmark)\dots))) \rrbracket_\S^\B(q) \\
		&= \Big(\widetilde\Diamond_S\big(\llbracket c \land \widetilde\heartsuit(f_{w_1'}(\dots(f_{w_n'}\cmark)\dots))\rrbracket_\S^\B\big)\Big)(\sigma(q)) \\
		&= \Big(\widetilde\Diamond_S\big(\llbracket c \land \widetilde\heartsuit(f_{w_1'}(\dots(f_{w_n'}\cmark)\dots))\rrbracket_\S^\B\big)\Big)(A_q, U_q), \qquad \text{where $U_q \coloneqq \{\cmark\}$ if $\pi_2(\Delta(q)) = 1$, else $U_q \coloneqq \varnothing$,} \\
		&= \sup\{\,\llbracket c \land \widetilde\heartsuit(f_{w_1'}(\dots(f_{w_n'}\cmark)\dots))\rrbracket_\S^\B(q', c') : (q', c') \in \suppt(A_q)\,\} \\
		&= \sup\{\,\llbracket c \land \widetilde\heartsuit(f_{w_1'}(\dots(f_{w_n'}\cmark)\dots))\rrbracket_\S^\B(q, c') : c' \in \Sigma\,\} \\
		&= \llbracket \widetilde\heartsuit (f_{w_1'}(\dots(f_{w_n'}\cmark)\dots)) \rrbracket_\S^\B(q, c) \\
		&= \widetilde\heartsuit_S \big(\llbracket f_{w_1'}(\dots(f_{w_n'}\cmark)\dots) \rrbracket_\S^\B\big)((\pi_1(\Delta(q)))(c),\, \{c\}) \\
		&= \heartsuit_S\big(\llbracket f_{w_1'}(\dots(f_{w_n'}\cmark)\dots) \rrbracket_\S^\B\big)((\pi_1(\Delta(q)))(c)) \\
		&= \heartsuit_Q\big(\llbracket f_{w_1'}(\dots(f_{w_n'}\cmark)\dots) \rrbracket_\S^\B \restriction_Q\big)((\pi_1(\Delta(q)))(c)) \\
		&= \heartsuit_Q(\mathbb A(-, w'))((\pi_1(\Delta(q)))(c)) \\
		&= \mathbb A(q, cw'),
	\end{align*}
	where the third-to-last equality follows from the naturality of $\heartsuit$ and the assumption that $F$ preserves inclusions, and the second-to-last equality follows from the inductive hypothesis.
	
	Therefore
	\begin{align*}
		\left\llbracket \left(\mu G^{\bullet \to \bullet}.\lambda X^\bullet.\left(X \lor \bigvee_{c \in \Sigma} G(\widetilde\Diamond(c \land \widetilde\heartsuit X))\right)\right)\cmark \right\rrbracket_\S^\B(q_0) 
		&= \sup_{n < \omega} \sup_{w \in \Sigma^{\leq n}} \llbracket f_{w_1}(f_{w_2}(\dots (f_{w_{|w|}} \cmark) \dots )) \rrbracket_\S^\B(q_0) \\
		&= \sup_{n < \omega} \sup_{w \in \Sigma^{\leq n}} \mathbb A(q_0, w) \\
		&= \sup_{w \in \Sigma^*} \mathbb A(q_0, w) \\
		&= \mathrm{val}(\mathbb A),
	\end{align*}
    as desired.
\end{proof}

In addition to preserving inclusions and weak wide pullbacks, the hypotheses of \cref{TopValueProblem_to_HFL} require the functor $F$ to satisfy the following property:
\begin{quote}
	``for any non-empty finite sets $Q$ and $\Sigma$, and for any $q \in Q$, there exists $A_q \in F(Q \times \Sigma)$ such that $\suppt(A_q) = \{q\} \times \Sigma$''.
\end{quote}
Since there exist functors $F \colon \Set \to \Set$ which preserve inclusions and weak wide pullbacks but do not satisfy the property above, the following \cref{CanonicalObjectWithFiniteSupport} provides a sufficient condition for the property above to be satisfied. It involves the non-empty finite power set functor $\P_{\mathrm{fin},\neq\varnothing} \colon \Set \to \Set$, which is the subfunctor of the (covariant) power set functor $\P$ defined on objects by
	\[
		\P_{\mathrm{fin},\neq\varnothing}(X) \coloneqq \{\,S \subseteq X : 0 < |S| < \aleph_0\,\}
	\]
	for all sets $X$.

\begin{proposition} \label{CanonicalObjectWithFiniteSupport}
	Suppose there exists a monic natural transformation $\kappa \colon \P_{\mathrm{fin},\neq\varnothing} \Rightarrow F$. Then for any set $X$ and for any non-empty finite subset $S \subseteq X$, we have $\suppt(\kappa_X(S)) = S$.
\end{proposition}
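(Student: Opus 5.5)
We prove the two inclusions $\suppt(\kappa_X(S)) \subseteq S$ and $S \subseteq \suppt(\kappa_X(S))$ separately, using naturality of $\kappa$ together with the standing assumption that $F$ preserves inclusions.

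\emph{The inclusion $\suppt(\kappa_X(S)) \subseteq S$.} Let $\iota \colon S \hookrightarrow X$ be the inclusion. Naturality of $\kappa$ gives $\kappa_X(\P_{\mathrm{fin},\neq\varnothing}(\iota)(S)) = (F\iota)(\kappa_S(S))$. Since $\P_{\mathrm{fin},\neq\varnothing}(\iota)(S) = S$, and $F\iota$ is the inclusion $FS \subseteq FX$ because $F$ preserves inclusions, we get $\kappa_X(S) = \kappa_S(S) \in FS$. Hence $S$ is one of the sets being intersected in the definition of the support, and $\suppt(\kappa_X(S)) \subseteq S$.

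\emph{The inclusion $S \subseteq \suppt(\kappa_X(S))$.} Suppose $T \subseteq X$ satisfies $\kappa_X(S) \in FT$. We must show $S \subseteq T$. Assume for contradiction that some $s \in S \setminus T$ exists.

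First we reduce to finite sets. By the standard fact that for inclusion-and-weak-pullback-preserving functors the support is itself a member (intersections of two members are members by pullback preservation), we may replace $T$ by $T \cap S$. So we may assume $T \subseteq S$, and since $s \notin T$ we have $T \subsetneq S$.

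Next we build a map that separates $s$ from $T$. Take the map $g \colon S \to S$ that fixes every point of $T$ and sends every point of $S \setminus T$ to a fixed point $t_0 \in T$. This requires $T \neq \varnothing$; the case $T = \varnothing$ is treated at the end. Let $j \colon T \hookrightarrow S$ be the inclusion. Then $g \circ j = j$, so
\[
(Fg)(\kappa_S(S)) = (Fg)(Fj)(\kappa_S(S)) = (Fj)(\kappa_S(S)) = \kappa_S(S),
\]
where the first equality uses $\kappa_S(S) \in FT$. On the other hand, naturality of $\kappa$ gives $(Fg)(\kappa_S(S)) = \kappa_S(g[S]) = \kappa_S(T)$. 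Therefore $\kappa_S(T) = \kappa_S(S)$ with $T \neq S$, which contradicts injectivity of $\kappa_S$; this is where monicity of $\kappa$ enters, since a monic natural transformation is componentwise injective.

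\emph{The case $T = \varnothing$.} Here $\kappa_S(S) \in F\varnothing$. Pick any point $x \in S$ and consider the two maps $S \to S$ given by $\id_S$ and the constant map $c_x$ with value $x$. Both restrict to the same (empty) map on $\varnothing$, so $F$ sends $\kappa_S(S)$ to the same element under both. Naturality then gives $\kappa_S(S) = \kappa_S(\{x\})$. If $|S| > 1$, then $S \neq \{x\}$ and this contradicts injectivity. If $|S| = 1$, note that $F\varnothing \subseteq F\{y\}$ for every singleton $\{y\}$. Embed $S$ into the two-point set $S + \{y\}$ and apply the same argument there: $\kappa(S)$ pushed along $\id$ and along the map swapping the two points gives $\kappa(S) = \kappa(\{y\})$, contradicting injectivity.

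The main obstacle is this handling of supports: justifying that $\kappa_X(S) \in F(\suppt(\kappa_X(S)))$, or at least that membership may be intersected with $S$ via preservation of pullbacks of inclusions, and dealing with the degenerate empty-set case. Everything else follows from naturality and injectivity.
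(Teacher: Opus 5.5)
Your proof is correct and follows essentially the paper's route. You use naturality along $S \hookrightarrow X$ for $\suppt(\kappa_X(S)) \subseteq S$, a retraction onto $T$ plus injectivity of $\kappa$ when $T \neq \varnothing$, and two maps that agree on $\varnothing$ but send $S$ to different sets when $T = \varnothing$. The paper skips your reduction to $T \subseteq S$, so it needs no pullback preservation (the step you flagged as the main obstacle): it takes $f \colon X \to X$ directly, so injectivity gives $f[S] = S \subseteq T$. It also handles the empty case uniformly via the two constant maps $X \to \mathbbm 2$, which give $\kappa_{\mathbbm 2}(\{0\}) = \kappa_{\mathbbm 2}(\{1\})$ without splitting on $|S|$.
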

\begin{proof}
    Fix any (non-empty) set $X$ as well as a non-empty finite subset $\varnothing \neq S \subseteq X$. We wish to show that $\suppt(\kappa_X(S)) = S$. For sets $S'$ and $X'$ with $S' \subseteq X'$, let $\iota_{S'}^{X'} \colon S' \hookrightarrow X'$ denote the inclusion function of $S'$ into $X'$. For a function $f' \colon X' \to Y'$, let $f'[S']$ denote the image of $S'$ under $f'$.

    The commuting square on the left below
\[\begin{tikzcd}
	{\P_{\mathrm{fin},\neq\varnothing} S} &&&& {\P_{\mathrm{fin},\neq\varnothing} X} && S && S \\
	\\
	FS &&&& FX && {\kappa_S(S)} && {\kappa_X(S)}
	\arrow["{\iota_S^X[-]}", hook, from=1-1, to=1-5]
	\arrow["{{{{\kappa_S}}}}"', tail, from=1-1, to=3-1]
	\arrow["{{{{\kappa_X}}}}", tail, from=1-5, to=3-5]
	\arrow["{\iota_S^X[-]}", maps to, from=1-7, to=1-9]
	\arrow["{{\kappa_S}}"', maps to, from=1-7, to=3-7]
	\arrow["{{{{\kappa_X}}}}", maps to, from=1-9, to=3-9]
	\arrow["{{{{F\iota_S^X \, = \, \iota_{FS}^{FX}}}}}"', hook, from=3-1, to=3-5]
	\arrow["{{{{\iota_{FS}^{FX}}}}}"', maps to, from=3-7, to=3-9]
\end{tikzcd}\]
    in $\Set$ gives rise to the chase of elements on the right. Since $\iota_{FS}^{FX}$ is the inclusion function of $FS$ into $FX$, it follows that $\kappa_X(S) = \kappa_S(S) \in FS$. So $\suppt(\kappa_X(S)) \subseteq S$.

    Now suppose we are given any subset $T \subseteq X$ such that $\kappa_X(S) \in FT$. We want to show that $S \subseteq T$. We split into two cases: when $T = \varnothing$ and when $T \neq \varnothing$.

    If $T = \varnothing$, then $\kappa_X(S) \in F\varnothing$. Since $X$ is non-empty by assumption, we can define two different functions $f_0, f_1 \colon X \to \mathbbm 2$ by $f_i(x) \coloneqq i$ for all $x \in X$ and for each $i \in \mathbbm 2 \coloneqq \{0, 1\}$. The initiality of $\varnothing$ in $\Set$ gives us $f_0 \iota_\varnothing^X = f_1 \iota_\varnothing^X$, and thus $(F f_0)(F\iota_\varnothing^X) = (Ff_1)(F\iota_\varnothing^X)$, yielding $(Ff_0)(\kappa_X(S)) =  (Ff_1)(\kappa_X(S))$. The commuting naturality squares
\[\begin{tikzcd}
	{\P_{\mathrm{fin},\neq\varnothing} X} && {\P_{\mathrm{fin},\neq\varnothing}\mathbbm 2} \\
	\\
	FX && {F \mathbbm 2}
	\arrow["{f_i[-]}", from=1-1, to=1-3]
	\arrow["{\kappa_X}"', tail, from=1-1, to=3-1]
	\arrow["{\kappa_{\mathbbm 2}}", tail, from=1-3, to=3-3]
	\arrow["{Ff_i}"', from=3-1, to=3-3]
\end{tikzcd}\]
    in $\Set$, for each $i \in \mathbbm 2$, then imply that 
    \[
        \kappa_{\mathbbm{2}} (\{0\}) = \kappa_{\mathbbm 2}(f_0[\kappa_X(S)]) = \kappa_{\mathbbm 2}(f_1[\kappa_X(S)]) = \kappa_{\mathbbm 2}(\{1\}).
    \]
    The above equality contradicts the injectivity of $\kappa_{\mathbbm 2}$. So the case $T = \varnothing$ is impossible.

    On the other hand, if $T \neq \varnothing$, then we can fix some $t \in T$ and define a function $f \colon X \to X$ by
    \[
        f(x) \coloneqq 
        \begin{cases}
            x, \quad &\text{if } x \in T, \\
            t, \quad &\text{otherwise},
        \end{cases}
    \]
    for all $x \in X$, so that $f \iota_T^X = \iota_T^X$. Then $(Ff)(F\iota_T^X) = F(\iota_T^X)$, and so $(Ff)(\kappa_X(S)) = \kappa_X(S)$. The naturality of $\kappa$ then yields $\kappa_X(f[S]) = \kappa_X(S)$. As $\kappa_X$ is injective, we obtain $f[S] = S$. By definition of $f$, we have $f[S] \subseteq T$, and therefore $S \subseteq T$.
\end{proof}

It may be interesting to note that, if the objects $A_q$ in the hypotheses of \cref{TopValueProblem_to_HFL} are obtained from the setting of \cref{CanonicalObjectWithFiniteSupport} by letting $A_q \coloneqq \kappa_{Q \times \Sigma}(\{q\} \times \Sigma)$, then the construction of $(S, \sigma)$ from $(Q, \Delta)$ as in the hypotheses of \cref{TopValueProblem_to_HFL} actually defines, on objects, a colimit-preserving functor from the category of $((F(-))^\Sigma \times \mathbbm 2)$-coalgebras to the category of $(F(-) \times \P(\Sigma + \{\cmark\}))$-coalgebras. In fact, this functor has a right adjoint. More concretely, we have the following \cref{automata_and_transition_systems} about this functor.

\begin{proposition}\label{automata_and_transition_systems}
	Fix a non-empty finite set $\Sigma$ and a monic natural transformation $\kappa \colon \P_{\mathrm{fin},\neq\varnothing} \Rightarrow F$. Define the following four functors.
	\begin{enumerate}
		\item Define a functor $L \colon \Coalg((F(-))^\Sigma \times \mathbbm 2) \to \Coalg(F(-) \times (\Sigma + \mathbbm 2))$ as follows.
			\begin{enumerate}
				\item For an $((F(-))^\Sigma \times \mathbbm 2)$-coalgebra $(X, \xi)$, let $L(X, \xi)$ be the $(F(-) \times (\Sigma + \mathbbm 2))$-coalgebra with state space $X + (X \times \Sigma)$ and structure map $\gamma_{(X,\xi)} \colon X + (X \times \Sigma) \to F(X + (X \times \Sigma)) \times (\Sigma + \mathbbm 2)$ given by
					\begin{alignat*}{2}
						\gamma_{(X,\xi)}(x) &\coloneqq (\kappa_{X \times \Sigma}(\{x\} \times \Sigma),\, \pi_2(\xi(x))), \quad &&\text{for all } x \in X, \\
						\gamma_{(X,\xi)}(x,c) &\coloneqq ((\pi_1(\xi(x)))(c),\,c),\quad &&\text{for all } (x,c) \in X \times \Sigma,
					\end{alignat*}
					where $(FX)^\Sigma \xleftarrow{\pi_1} (FX)^\Sigma \times \mathbbm 2 \xrightarrow{\pi_2} \mathbbm 2$ are the relevant projections.
				\item For an $((F(-))^\Sigma \times \mathbbm 2)$-coalgebra morphism $h$, define $Lh \coloneqq h + (h \times \id_\Sigma)$.
			\end{enumerate}
		\item Define a functor $R \colon \Coalg(F(-) \times (\Sigma + \mathbbm 2)) \to \Coalg((F(-))^\Sigma \times \mathbbm 2)$ as follows.
			\begin{enumerate}
				\item For an $(F(-) \times (\Sigma + \mathbbm 2))$-coalgebra $(Y, \gamma)$, define a function $\Phi \colon \P Y \to \P Y$ by
				\begin{align*}
					\Phi(U) \coloneqq 
						\Big\{\,y \in Y \ : \ 
							&\pi_2(\gamma(y)) \in \mathbbm 2, \\
							&\pi_1(\gamma(y)) = \kappa_Y(S_y) \text{ for some (unique) } S_y \in \P_{\mathrm{fin},\neq\varnothing}(Y) \text{ with } |S_y| = |\Sigma|, \\
							&\text{and for all } c \in \Sigma \text{ there exists (a unique) } y_c \in S_y \text{ such that } \\
							&\pi_1(\gamma(y_c)) \in FU \text{ and } \pi_2(\gamma(y_c)) = c\,\Big\}
				\end{align*}
				for all subsets $U \subseteq Y$, where $FY \xleftarrow{\pi_1} FY \times (\Sigma + \mathbbm 2) \xrightarrow{\pi_2} \Sigma + \mathbbm 2$ are the relevant projections. Then let $R(Y, \gamma)$ be the $((F(-))^\Sigma \times \mathbbm 2)$-coalgebra whose state space $\overline Y$ is the greatest fixed-point of $\Phi$ (where $\P Y$ is equipped with its usual order) and whose structure map $\xi_{(Y,\gamma)} \colon \overline Y \to (F\overline Y)^\Sigma \times \mathbbm 2$ is given by
				\[
					\xi_{(Y, \gamma)}(y) \coloneqq ((\Sigma \ni c \mapsto \pi_1(\gamma(y_c)) \in F\overline Y),\,\pi_2(\gamma(y)))
				\]
				for all $y \in \overline Y$.
				\item For an $(F(-) \times (\Sigma + \mathbbm 2))$-coalgebra morphism $h$, let $Rh$ be the function $h \restriction_{\overline{\dom(h)}}$.
			\end{enumerate}
		\item Define a functor $H \colon \Coalg(F(-) \times (\Sigma + \mathbbm 2)) \to \Coalg(F(-) \times \P(\Sigma + \{\cmark\}))$ as follows.
			\begin{enumerate}
				\item For an $(F(-) \times (\Sigma + \mathbbm 2))$-coalgebra $(Y, \gamma)$, let $H(Y, \gamma)$ be the $(F(-) \times \P(\Sigma + \{\cmark\}))$-coalgebra with state space $Y$ and structure map $\zeta_{(Y, \gamma)} \colon Y \to FY \times \P(\Sigma + \{\cmark\})$ defined by
					\[
						\zeta_{(Y,\gamma)}(y) \coloneqq 
						\begin{cases}
							(\pi_1(\gamma(y)),\, \{\pi_2(\gamma(y))\}), \quad &\text{if } \pi_2(\gamma(y)) \in \Sigma, \\
							(\pi_1(\gamma(y)),\, \{\cmark\}), \quad &\text{if } \pi_2(\gamma(y)) = 1 \in \mathbbm 2, \\
							(\pi_1(\gamma(y)),\, \varnothing), \quad &\text{if } \pi_2(\gamma(y)) = 0 \in \mathbbm 2,
						\end{cases}
					\]
					for all $y \in Y$, where $FY \xleftarrow{\pi_1} FY \times (\Sigma + \mathbbm 2) \xrightarrow{\pi_2} \Sigma + \mathbbm 2$ are the relevant projections.
				\item For an $(F(-) \times (\Sigma + \mathbbm 2))$-coalgebra morphism $h$, define $Hh \coloneqq h$.
			\end{enumerate}
		\item Finally, define a functor $K \colon \Coalg(F(-) \times \P(\Sigma + \{\cmark\})) \to \Coalg(F(-) \times (\Sigma + \mathbbm 2))$ as follows.
			\begin{enumerate}
				\item For an $(F(-) \times \P(\Sigma + \{\cmark\}))$-coalgebra $(Z, \zeta)$, define a function $\Psi \colon \P Z \to \P Z$ by
					\[
						\Psi(V) \coloneqq \{\,z \in Z : \pi_1(\zeta(z)) \in FV \text{ and } |\pi_2(\zeta(z))| \leq 1\,\}
					\]
					for all subsets $V \subseteq Z$, where $FZ \xleftarrow{\pi_1} FZ \times \P(\Sigma + \{\cmark\}) \xrightarrow{\pi_2} \P(\Sigma + \{\cmark\})$ are the relevant projections. Then let $K(Z, \zeta)$ be the $(F(-) \times (\Sigma + \mathbbm 2))$-coalgebra whose state space $\widetilde Z$ is the greatest fixed-point of $\Psi$ and whose structure map $\gamma_{(Z,\zeta)} \colon \widetilde Z \to F\widetilde Z \times (\Sigma + \mathbbm 2)$ is given by
					\[
						\gamma_{(Z,\zeta)}(z) \coloneqq 
						\begin{cases}
							(\pi_1(\zeta(z)), \, c), \quad &\text{if } \pi_2(\zeta(z)) = \{c\} \text{ for some } c \in \Sigma, \\
							(\pi_1(\zeta(z)), \, 1), \quad &\text{if } \pi_2(\zeta(z)) = \{\cmark\}, \\
							(\pi_1(\zeta(z)), \, 0), \quad &\text{if } \pi_2(\zeta(z)) = \varnothing,
						\end{cases}
					\]
					for all $z \in \widetilde Z$.
				\item For an $(F(-) \times \P(\Sigma + \{\cmark\}))$-coalgebra morphism $h$, let $Kh$ be the function $h \restriction_{\widetilde{\dom(h)}}$.
			\end{enumerate}
	\end{enumerate}
	Then we have the following adjunctions:
\[\begin{tikzcd}
	{\Coalg((F(-))^\Sigma \times \mathbbm 2)} & {\Coalg(F(-) \times (\Sigma + \mathbbm 2))} & {\Coalg(F(-) \times \P(\Sigma + \{\cmark\})).}
	\arrow[""{name=0, anchor=center, inner sep=0}, "L", shift left=2, from=1-1, to=1-2]
	\arrow[""{name=1, anchor=center, inner sep=0}, "R", shift left=2, from=1-2, to=1-1]
	\arrow[""{name=2, anchor=center, inner sep=0}, "H", shift left=2, from=1-2, to=1-3]
	\arrow[""{name=3, anchor=center, inner sep=0}, "K", shift left=2, from=1-3, to=1-2]
	\arrow["\dashv"{anchor=center, rotate=-90}, draw=none, from=0, to=1]
	\arrow["\dashv"{anchor=center, rotate=-90}, draw=none, from=2, to=3]
\end{tikzcd}\]
\end{proposition}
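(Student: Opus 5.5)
I will prove each adjunction by exhibiting a universal arrow. For $L \dashv R$, I show that for every $((F(-))^\Sigma\times\mathbbm 2)$-coalgebra $(X,\xi)$ and every $(F(-)\times(\Sigma+\mathbbm 2))$-coalgebra $(Y,\gamma)$ there is a bijection, natural in both arguments, between coalgebra morphisms $L(X,\xi)\to(Y,\gamma)$ and $(X,\xi)\to R(Y,\gamma)$. For $H\dashv K$, I show the analogous bijection between morphisms $H(Y,\gamma)\to(Z,\zeta)$ and $(Y,\gamma)\to K(Z,\zeta)$. Before this I check that $R$ and $K$ are well defined. The maps $\Phi$ and $\Psi$ are monotone because $F$ preserves inclusions, so $U\subseteq U'$ gives $FU\subseteq FU'$; Knaster--Tarski then gives greatest fixed points $\overline Y$ and $\widetilde Z$. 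Since $\overline Y=\Phi(\overline Y)$, each $y_c$ satisfies $\pi_1(\gamma(y_c))\in F\overline Y$, so $\xi_{(Y,\gamma)}$ lands in $(F\overline Y)^\Sigma\times\mathbbm 2$; the same argument works for $\widetilde Z$. Uniqueness of $S_y$ comes from injectivity of $\kappa_Y$. Uniqueness of $y_c$ comes from $|S_y|=|\Sigma|$ together with the requirement that every $c$ occurs among the labels.

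\textbf{The adjunction $H\dashv K$.} The functor $H$ is essentially the inclusion of the full subcategory of $(F(-)\times\P(\Sigma+\{\cmark\}))$-coalgebras whose label sets have size at most one, after identifying $\Sigma+\mathbbm 2$ with those subsets ($c\mapsto\{c\}$, $1\mapsto\{\cmark\}$, $0\mapsto\varnothing$). The functor $K$ is the largest subcoalgebra of that kind. Given a morphism $h\colon H(Y,\gamma)\to(Z,\zeta)$, I show $h[Y]\subseteq\widetilde Z$ by coinduction: $h[Y]\subseteq\Psi(h[Y])$. This holds because $\zeta(h(y))=(Fh\times\id)(\zeta_{(Y,\gamma)}(y))$, the first component lies in $F(h[Y])$ (since $Fh$ factors through the image and $F$ preserves inclusions), and the label set is a singleton or empty. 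Hence $h$ corestricts to a morphism $(Y,\gamma)\to K(Z,\zeta)$. Conversely, composing with the inclusion $\widetilde Z\hookrightarrow Z$, which is an $(F(-)\times\P(\Sigma+\{\cmark\}))$-morphism, gives the inverse. Naturality is immediate because everything is a restriction or corestriction of the same function.

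\textbf{The adjunction $L\dashv R$.} Given a morphism $g\colon L(X,\xi)\to(Y,\gamma)$, I take $\hat g\coloneqq g\restriction_X$. To see that $\hat g$ lands in $\overline Y$, I show $U\coloneqq g[X]$ satisfies $U\subseteq\Phi(U)$.
\begin{itemize}
\item For $x\in X$, naturality of $\kappa$ gives $\pi_1\gamma(g(x))=Fg(\kappa(\{x\}\times\Sigma))=\kappa_Y(g[\{x\}\times\Sigma])$.
\item The points $y_c=g(x,c)$ carry distinct labels $c$, so they are distinct and $|S_y|=|\Sigma|$.
\item Each $\pi_1\gamma(y_c)=Fg(\pi_1(\xi(x))(c))$ lies in $F(g[X])$, by inclusion preservation.
\end{itemize}
For the inverse, given $k\colon(X,\xi)\to R(Y,\gamma)$, I define $\check k(x)=k(x)$ and $\check k(x,c)=k(x)_c$, the unique $y_c$ attached to $k(x)$. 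I then check the morphism square for $\check k$: on $X$ it uses naturality of $\kappa$ and $S_{k(x)}=\{k(x)_c\}_c$, and on $X\times\Sigma$ it unfolds the definition of $\xi_{(Y,\gamma)}$. Finally I verify that the two assignments are mutually inverse. For a morphism $g$ out of $L(X,\xi)$, the value $g(x,c)$ is forced to be $g(x)_c$, because the label of $g(x,c)$ is $c$ and it lies in $S_{g(x)}$. Naturality in both variables is a routine check.

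\textbf{Expected main obstacle.} The hardest step is verifying that $\check k$ is a coalgebra morphism at points of $X$. This needs $F\check k(\kappa_{X\times\Sigma}(\{x\}\times\Sigma))=\kappa_Y(S_{k(x)})$. By naturality of $\kappa$ the left side equals $\kappa_Y(\check k[\{x\}\times\Sigma])$, so it remains to identify $\check k[\{x\}\times\Sigma]$ with $S_{k(x)}$. I also need to be careful that $\pi_1\gamma$ is read in $FY$ versus $F\overline Y$, and that $\kappa_{\overline Y}$ and $\kappa_Y$ agree along the inclusion (as in the proof of \cref{CanonicalObjectWithFiniteSupport}). This is where injectivity of $\kappa$ and the support computation of \cref{CanonicalObjectWithFiniteSupport} enter.
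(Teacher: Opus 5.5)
Your proposal is correct and follows essentially the paper's proof sketch. Your transposes are exactly the paper's: $f\mapsto f\restriction_X$, and $g\mapsto\bigl(x\mapsto g(x),\ (x,c)\mapsto g(x)_c\bigr)$ for $L\dashv R$, with restriction/corestriction for $H\dashv K$. Your post-fixed-point argument ($g[X]\subseteq\Phi(g[X])$, $h[Y]\subseteq\Psi(h[Y])$) is an equally valid substitute for the paper's transfinite induction on the Kleene approximants. The same coinductive argument also gives the functoriality of $R$ and $K$ on morphisms, which you left implicit. Finally, the identification $\check k[\{x\}\times\Sigma]=S_{k(x)}$ needs only counting (the $k(x)_c$ are $|\Sigma|$ distinct elements of the $|\Sigma|$-element set $S_{k(x)}$) together with agreement of $\kappa$ along inclusions, not the support computation of \cref{CanonicalObjectWithFiniteSupport}.
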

\begin{proof}[Proof sketch]
	Proving this is tedious but elementary.
	
	The fact that the mapping $L$ sends $((F(-))^\Sigma \times \mathbbm 2)$-coalgebra morphisms to $(F(-) \times (\Sigma + \mathbbm 2))$-coalgebra morphisms is due to the naturality of $\kappa$ as well as the ambient assumption that $F$ preserves inclusions. Its functoriality is clear.
	
	For the mapping $R$, we can show that an $(F(-) \times (\Sigma + \mathbbm 2))$-coalgebra morphism $h \colon (Y, \gamma) \to (Y', \gamma')$ restricts to a function $h \restriction_{\overline Y} \colon \overline Y \to \overline{Y'}$ by transfinite induction on the construction of $\overline{Y'}$ from Kleene's fixed-point theorem. The monotonicity of the function $\Phi$ uses the assumption that $F$ preserves inclusions. Furthermore, we use the naturality of $\kappa$ as well as the assumption that $F$ preserves inclusions to perform the inductive step of the argument. That $h \restriction_{\overline Y} \colon \overline Y \to \overline{Y'}$ is actually an $((F(-))^\Sigma \times \mathbbm 2)$-coalgebra morphism from $(\overline Y, \xi_{(Y,\gamma)})$ to $(\overline{Y'}, \xi_{(Y', \gamma')})$ is easy to see. The functoriality of $R$ is also clear.
	
	It is obvious that $H$ is a functor. We can prove that $K$ is a functor in a similar manner to how we proved that $R$ is a functor.
	
	The adjunction bijection for $L \dashv R$ is as follows. For an $(F(-) \times (\Sigma+\mathbbm 2))$-coalgebra morphism $f \colon L(X,\xi) \to (Y, \gamma)$, its transpose is $\overline f \colon (X,\xi) \to R(Y, \gamma)$ given by $\overline{f}(x) \coloneqq f(x)$ for all $x \in X$. In the other direction, for an $((F(-))^\Sigma \times \mathbbm 2)$-coalgebra morphism $g \colon (X,\xi) \to R(Y, \gamma)$, its transpose is $\overline g \colon L(X, \xi) \to (Y, \gamma)$ given by $\overline{g}(x) \coloneqq g(x)$ for all $x \in X$, and $\overline{g}(x, c) \coloneqq (g(x))_c$ for all $(x,c) \in X \times \Sigma$.
	
	The adjunction bijection for $H \dashv K$ is easier. For an $(F(-) \times \P(\Sigma + \{\cmark\}))$-coalgebra morphism $f \colon H(Y, \gamma) \to (Z, \zeta)$, its transpose $\overline f \colon (Y, \gamma) \to K(Z, \zeta)$ is defined by $\overline f(y) \coloneqq f(y)$ for all $y \in Y$. Finally, for an $(F(-) \times (\Sigma + \mathbbm 2))$-coalgebra morphism $g \colon (Y, \gamma) \to K(Z, \zeta)$, its transpose $\overline g \colon H(Y, \gamma) \to (Z, \zeta)$ is defined by $\overline g(y) \coloneqq g(y)$ for all $y \in Y$.
\end{proof}

\section{Future work}

We close with several possible directions that can be taken with this formulation of coalgebraic HFL.

\begin{question}
    Does the fixed-point alternation hierarchy for $\HFL_\Mod^\B$ collapse?
\end{question}

This question has been attacked with some success by \citet[Corollary~6.2.12]{BrusePhDthesis2018} when the functor in question is the (covariant) power set functor, $\B = \mathbbm 2$, and when our attention is restricted to order-1 formulas (also see \cite{Bruse2016}, but be aware of the footnote on page 147 of \cite{BrusePhDthesis2018}). In this scenario, the alternation of fixed-point operators has been shown to produce a strict hierarchy.

\begin{question}
    What conditions ensure the decidability of the model-checking problem for $\HFL_\Mod^\B$?
\end{question}

If the state space of the coalgebra in question is finite, the functor in question is the power set functor, and the complete lattice in question is the two-valued complete lattice $\mathbbm 2$, then the model-checking problem is clearly decidable. In particular, the model-checking problem for the original formulation of HFL \citep{ViswanathanViswanathan2004} is decidable. However, in \citet*[Corollary~3.6]{MitaniKobayashiTsukada2021}, it is shown that when both the functor and complete lattice in question are as in \cref{Example_Markov}, then the model-checking problem becomes undecidable. This is because \citet[Theorem~4]{GimbertOualhadj2010} showed that the value-1 problem for probabilistic automata is undecidable.

\begin{question}
    Return to the setting of \cref{sec:TopValueProblem}. Let $\mathbb A = (Q, \Sigma, \Delta, q_0, \heartsuit)$ be a finite $(F, \B)$-automaton. Define the \emph{language of $\mathbb A$}, denoted $\mathcal L(\mathbb A)$, to be
    \[
        \mathcal L(\mathbb A) \coloneqq \{\,w \in \Sigma^* : \mathbb A(q_0, w) = \top_\B\,\}.
    \]
    Can the problem of the (non-)emptiness of $\mathcal L(\mathbb A)$ be reduced to a model-checking problem of some coalgebraic HFL formula?
\end{question}

A condition that makes the above possible is when the lattice $\B$ satisfies the following property:
\[
    \text{for any $S \subseteq \B$, if $\sup S = \top_\B$ then $\top_\B \in S$.}
\]
This sufficient condition is evidently rather unsatisfactory.

\renewcommand \refname{References}
\addcontentsline{toc}{section}{References}
\begin{flushleft}
	\bibliography{biblio.bib}
\end{flushleft}
\bibliographystyle{plainnat}
\nocite{*}

\end{document}